\newcommand{\RN}[1]{%
  \textup{\uppercase\expandafter{\romannumeral#1}}%
}
\newtheorem{theorem}{Theorem}
\newtheorem{lem}{Lemma}
\DeclareMathOperator*{\argmin}{argmin}
\DeclareMathOperator*{\argmax}{argmax}
\newtheorem{prob}{Problem}
\newtheorem{assumption}{Assumption}
\newtheorem{rmk}{Remark}
\begin{document} 

\title{Data-Driven Approximate Abstraction for Black-Box Piecewise Affine Systems}

%% of EIGHT authors. SIX appear on the 'first-page' (for formatting
%% reasons) and the remaining two appear in the \additionalauthors section.
%%

%% You can go ahead and credit any number of authors here,
%% e.g. one 'row of three' or two rows (consisting of one row of three
%% and a second row of one, two or three).
%%
%% The command \alignauthor (no curly braces needed) should
%% precede each author name, affiliation/snail-mail address and
%% e-mail address. Additionally, tag each line of
%% affiliation/address with \affaddr, and tag the
%% e-mail address with \email.
%%
% 1st. author
\author{Gang Chen and Zhaodan Kong}
\maketitle 
\let\thefootnote\relax\footnotetext{Authors are with the Department of Mechanical and Aerospace Engineering, University of California, Davis. Z. Kong is the corresponding author (email: zdkong@ucdavis.edu).}

\begin{abstract}
How to effectively and reliably guarantee the correct functioning of safety-critical cyber-physical systems in uncertain conditions is a challenging problem. This paper presents a data-driven algorithm to derive approximate abstractions for piecewise affine systems with unknown dynamics. It advocates a significant shift from the current paradigm of abstraction, which starts from a model with known dynamics. Given a black-box system with unknown dynamics and a linear temporal logic specification, the proposed algorithm is able to obtain an abstraction of the system with an arbitrarily small error and a bounded probability. The algorithm consists of three components, system identification, system abstraction, and active sampling. The effectiveness of the algorithm is demonstrated by a case study with a soft robot. 
\end{abstract}

\section{Introduction}

The proliferation of cyber-physical systems (CPSs) brings how to effectively and reliably guarantee their correct behaviors to the forefront of problems we as control engineers need to address. One natural choice to attain correct functioning is to consider formal methods techniques, such as model checking \cite{baier2008principles,alur2015principles}, which have been successfully used in the formal verification and synthesis of digital circuits and software codes \cite{peled2013software}. In recent years, we have seen many efforts of extending formal methods to engineering applications, e.g., automobiles \cite{kamali2017formal,abbas2014robustness} and robotics \cite{lopes2016supervisory,ulusoy2014incremental}. One crucial component of formal methods is a precise and potentially concise mathematical model of the system under investigation. However, in reality we rarely have full knowledge of complex CPSs during their design and even testing phase. Thus how to attain formal guarantee for systems with partially or fully unknown dynamics becomes a problem of practical significance. 

In this paper, we aim to address this problem in the context of \textit{abstraction} \cite{baier2008principles}. Given a system model $\mathcal{T}$ (which can potentially have infinitely many states) and a formal specification $\phi$ written, for instance, in linear temporal logic (LTL), an \textit{abstract} model of $\mathcal{T}$ is a simpler model $\mathcal{T}'$; checking whether the simpler model $\mathcal{T}'$ satisfies $\phi$ suffices to decide whether $\mathcal{T}$ satisfies $\phi$ \cite{girard2010approximately,haghverdi2005bisimulation,prabhakar2015stability,pola2015symbolic}. For systems that can be described by discrete state models, abstraction can be achieved by using the concepts of simulation and bi-simulation \cite{baier2008principles,alur2015principles}. In control community, recently there have been many successful efforts pertaining to the abstraction of systems of more realistic dynamics, such as those that are piecewise affine \cite{yordanov2012temporal,gol2014finite,decastro2015dynamics,pola2014symbolic}. All these studies, as far as we know, assume models of known dynamics, which significantly impedes the application of abstraction in the analysis and design of systems with inherent uncertainties, e.g., those needing to interact with a variety of human users and be deployed in a variety of environments.

One principle way of mitigating uncertainties is to utilize machine learning techniques. Actually, the integration of formal methods and machine learning has shown great potential in the formal specification, design, verification, and validation of CPS \cite{dreossi2017compositional,urayama2015simulation,seshia2016design,haesaert2015data,kozarev2016case,kong2017temporal}. In this paper, we will focus on how to combine machine learning techniques, particularly system identification and active learning, and formal methods techniques to generate approximate abstractions for systems with black-box (unknown), piecewise affine (PWA) dynamics. PWA models partition the state space into a finite number of polyhedral regions and consider affine dynamics in each region \cite{bemporad2005bounded}. It has been shown that PWA models can approximate nonlinear dynamics with arbitrary accuracy \cite{yordanov2012temporal}. Moreover, there exist efficient techniques for the identification of PWA systems, e.g., optimization-based methods and clustering-based methods \cite{garulli2012survey}. 

The major contribution of this paper is that it addresses many theoretical and algorithmic issues pertaining to the integration of existing approximate abstraction techniques \cite{yordanov2013formal} and system identification techniques \cite{garulli2012survey}. Given a system with unknown PWA dynamics, the paper shows that, by following the algorithm prescribed in the paper, it is possible to extract an abstract model with an arbitrarily small error and a bounded probability (under certain mild assumptions). Even though the paper focuses on PWA systems, the preliminary results obtained in it can potentially pave the way for future developments for systems with more complex dynamics. 

The remainder of the paper is organized as follows. In Section \ref{sec2}, we provide preliminaries and notation used throughout the paper. In Section \ref{sec3}, we formally introduce the abstraction problem that will be solved in the paper. Section \ref{sec4} presents our data-driven approximate abstraction algorithm, together with proofs demonstrating the effectiveness of our algorithm. Section \ref{sec5} uses a soft robot system as an example to showcase our proposed algorithm. We conclude with final remarks in Section \ref{sec6}.
    
\section{Preliminaries and Notation}
\label{sec2}

A $N$ dimensional \textit{polytope} $\mathcal{X}$ is defined as the convex hull of at least $N+1$ affinely independent vectors in $\mathbb{R}^{N}$. A complete partition of $\mathcal{X}$ is a set of open polytopes $\mathcal{X}_{i}, i\in I$ ($I$ is a finite index set) in $\mathbb{R}^{N}$ such that $\mathcal{X}_{i_1}\cap \mathcal{X}_{i_2}=\emptyset$ for all $i_{1}, i_{2} \in I , i_{1}\neq i_{2}$ and $cl(\mathcal{X})=\cup_{i\in I} cl (\mathcal{X}_{i})$, where $cl(\mathcal{X}_{i})$ denotes the closure set of $\mathcal{X}_{i}$. According to the $H$-representation, each $\mathcal{X}_{i}, i\in I$ can be represented as $\mathcal{X}_{i}= \{ x \in \mathbb{R}^N: H_{i}x \prec K_{i} \}$, where $\prec$ denotes componentwise inequality. 

A \textit{piecewise affine (PWA) system} \cite{bemporad2005bounded} can be written as follows:
\begin{equation}
\label{PWAARX}
\begin{array}{rl}
x_{k+1}&=f(x_{k})+e\\
f(x)&=\left\lbrace 
\begin{array}{lll}
A_{1}x+b_{1} \text{ if  }  x\in \mathcal{X}_{1}\\
\vdots\\
A_{s}x+b_{s} \text{ if  }  x\in \mathcal{X}_{s}\\
\end{array}\right.
\end{array}
\end{equation}
where $x_k$  is the state of the system at step $k$; $f:\mathcal{X} \rightarrow \mathcal{R}^{N}$ is a PWA map; $e \in \mathcal{N}(0,\sigma_{e}^2)$ is an independently, identically distributed and zero mean Gaussian noise with standard deviation $\sigma_{e}$; $s$ is the number of modes; $A_{i}, b_{i}$ are the parameters of the $i$-th mode ($A_{i}, i=1,\cdots,s$ is assumed to be nonsingular in this paper); and all $s$ modes together constitute a complete partition of $\mathcal{X}$. 

A \textit{transition system} is a tuple $\mathcal{T}=(Q, \delta, O, o)$, where $Q$ is the state space; $\delta :Q \rightarrow 2^{Q}$ ($2^{Q}$ is the powerset of $Q$) is a transition map assigning a state $q \in Q$ to its next state $q'\in Q$; $O$ is the set of observations; and $o :Q\rightarrow O$ is an observation map assigning each $q\in Q$ an observation $o(q)\in O$ \cite{baier2008principles}. We denote a region of the state space as $P\subset Q$. The \textit{embedding transition system} of a PWA system  $\mathcal{S}$ described by Eqn. (\ref{PWAARX}) is a tuple $\mathcal{T}_{e}=(Q_{e}, \delta_{e}, O_{e},o_{e})$, where $Q_{e}=\cup_{i\in I}\mathcal{X}_{i}$; $\delta_{e}: x\rightarrow x'$ iff there exists $i\in I$ such that the transition from $x$ to $x'$ satisfies Eqn. (\ref{PWAARX}); $O_{e}=I$; and $o_{e}(x)=i \text{ iff } x\in \mathcal{X}_{i}$ \cite{yordanov2013formal}. 

For a transition system $\mathcal{T}$, including embedding transition systems of PWA systems, the \textit{successor} of a region $P \subset Q$ is define as the set of states that can be reached from the states in $P$ in one step, i.e., $Post(P)=\{ q\in Q \mid \exists p\in P \text{ with } p\rightarrow  q\}$. The \textit{predecessor} of a region $P \subset Q$ can be defined similarly as $Pre(P)=\{ q\in Q \mid \exists q\in P \text{ with } q\rightarrow  p\}$. A state $q \in Q$ is called \textit{reachable} if there exists a finite execution ending at $q$. We denote all the reachable states of $\mathcal{T}$ by $Reach(\mathcal{T})$. Given an LTL formula $\phi$ over $O$ and a system $\mathcal{T}$, if all the traces originating from a region $P \subset Q$ satisfy $\phi$, then we denote the situation as $\mathcal{T}(P)\models \phi$. Let $X^{\phi}_{\mathcal{T}}=\{X \subset Q, \mathcal{T}(X)\models \phi\}$ denote the largest region of $\mathcal{T}$ from which all the traces satisfy $\phi$ \cite{baier2008principles}. 

The \textit{reachability metric} over two transition systems $\mathcal{T}_{1}$ and $\mathcal{T}_{2}$ is defined as \cite{girard2007approximation}: 
\begin{equation*}
d(\mathcal{T}_{1},\mathcal{T}_{2})=h(Reach(\mathcal{T}_{1}),Reach(\mathcal{T}_{2})),
\end{equation*}
where $h$ is the Hausdorff distance. Given two transition systems $\mathcal{T}_{1}$ and $\mathcal{T}_{2}$ with the same observation set $O$ and a reachability metric $d$ defined over them, a relation $\mathcal{S}_{\sigma}\subseteq Q_{1}\times Q_{2}$ is called a $\sigma-$\textit{approximate simulation relation} of $\mathcal{T}_{1}$ by $\mathcal{T}_{2}$ \cite{girard2007approximation} if for all $(q_{1},q_{2})\in \mathcal{S}_{\sigma}$:
\begin{itemize}
\item $d(o(q_{1}),o(q_{2}))\leq\sigma$,
\item $\forall q_{1}'=Post(q_{1})$, there exists $q_{2}'=Post(q_{2})$, such that $ (q_{1}',q_{2}')\in \mathcal{S}_{\sigma}$.
\end{itemize}
Moreover, $\mathcal{T}_{1}$ is said to be $\sigma-$\textit{approximately simulated} by $\mathcal{T}_{2}$, denoted $\mathcal{T}_{1} \prec_{\sigma} \mathcal{T}_{2}$.

%The \textit{modified reachability metric} (simply called \textit{reachability metric} in the following text) over a PWA system $\mathcal{S}$ and a transition system $\mathcal{T}$ is defined as:
%\begin{equation}
%d_e(\mathcal{S},\mathcal{T}) = d(\mathcal{T}_{e},\mathcal{T}),
%\end{equation}
%where $\mathcal{T}_{e}$ is the embedding transition system of the PWA system $\mathcal{S}$.   

\section{Problem Statement}
\label{sec3}

Formally, in this paper, we wish to solve the following problem:
\begin{prob}
Given a PWA system $\mathcal{S}$ with unknown dynamics, an LTL specification $\phi$, and a bound $\sigma>0$, find a finite transition system $\hat{\mathcal{T}}$ such that $p(\hat{\mathcal{T}} \prec_{\sigma} \mathcal{T})>1-\delta$, where $p(.)$ stands for probability, $\mathcal{T}$ is the true abstract transition system of $\mathcal{S}$, and $\delta$ is bounded.
\label{mainproblem}
\end{prob}

\begin{rmk}
By unknown dynamics, we mean that the following system characteristics are unknown: (i) the number of modes, $s$, (ii) the parameters related to the dynamics of each mode, $\{A_i, b_i, i=1,\cdots,s\}$, (iii) the parameters related to the partitions (regions) of the state space, $\{H_i, K_i, i=1,\cdots,s\}$, and (iv) the standard deviation of the Gaussian noise, $\sigma_e$. But we assume that our algorithm, which will be presented in the next section, can use the system as a black-box simulator to generate samples. This is a reasonable assumption since during system design and testing phases, engineers can always get access to a full-scale system model, a scaled system model, or a computer simulation to generate samples \cite{jin2015mining,kong2017temporal}.
\end{rmk}

\begin{rmk}
Notice that the requirement $p(\hat{\mathcal{T}} \prec_{\sigma} \mathcal{T})>1-\delta$ is inspired by the concept of probably approximately correct (PAC) models in machine learning \cite{valiant2013probably}. It simply says that the probability that the transition system $\hat{\mathcal{T}}$ (obtained by using our algorithm) is a $\sigma$-approximate simulation by the PWA system $\mathcal{S}$ is higher than $1-\delta$. In other words, given a PWA system with unknown dynamics, we intend to find out its approximate abstract transition system $\hat{\mathcal{T}}$ with a high enough confidence.  
\end{rmk}

\section{Data-Driven Abstraction Algorithm}
\label{sec4}

\begin{figure}[hbtp!]
\centering
\includegraphics[width=0.5\textwidth]{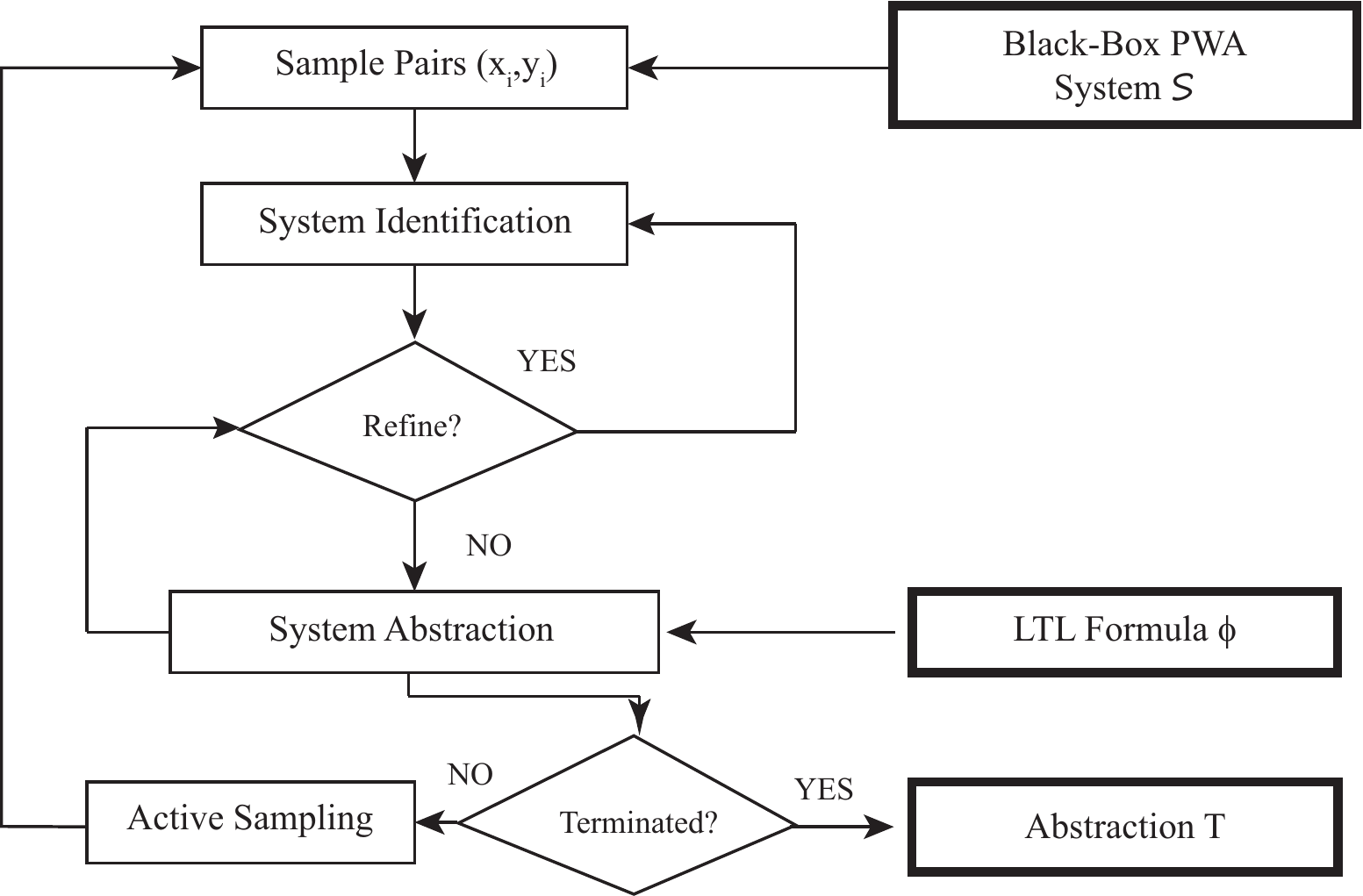}
\caption{Architecture of our data-driven abstraction algorithm.}
\label{overall}
\end{figure}

Fig. \ref{overall} illustrates the basic architecture of our data-driven abstraction algorithm to solve Problem \ref{mainproblem}. The inputs of the algorithm are a black-box PWA system $\mathcal{S}$ with unknown dynamics and an LTL specification $\phi$; the output of the system is a transition system $\mathcal{T}$. The algorithm can be roughly divided into three components: system identification, system abstraction, and active sampling. The goal of the \textit{system identification} component is to derive an estimated PWA model $\hat{\mathcal{S}}$ based on the data sampled from the black-box system $\mathcal{S}$ (serving as a simulator); the goal of the \textit{system abstraction} component is to derive a transition system $\mathcal{T}$ given the identified PWA model $\hat{\mathcal{S}}$ and the specification $\phi$; one important procedure of the system identification component is refinement, which refines the estimated model $\hat{\mathcal{S}}$, until no significant improvement can be achieved, based on the currently available data and the current abstraction $\mathcal{T}$; finally, if no satisfactory abstraction $\mathcal{T}$ can be found after the refinement, the \textit{active sampling} component will be implemented to draw new data points with the help of the black-box simulator $\mathcal{S}$. 

In the following sub-sections, we will present each of the three components. Proofs regarding the effectiveness of our algorithm will be provided at the end of the section.   

\subsection{System Identification}
\label{sub:systemid}

Given a black-box PWA system $\mathcal{S}$, or subsequently a set of $K$ samples $\mathcal{D}:=\{y_k, x_k\}, k =1,\cdots, K$, the system identification component identifies a PWA model, specified by the number of modes $s$ as well as the mode parameters $\hat{A}_{i},\hat{b}_i, \hat{H}_{i}$ and $\hat{K}_{i}$ with $i=1,\cdots, s$. The problem itself is a well-studied problem. Specifically, we need to find (i) a minimum positive integer, $s$, (ii) a set of parameter matrices, $\{\hat{A}_{i}\}_{i=1}^{s}$ and $\{\hat{H}_{i}\}_{i=1}^{s}$, and (iii) a set of parameter vector $\{\hat{b}_{i}\}_{i=1}^{s}$ and $\{\hat{K}_{i}\}_{i=1}^{s}$ (notice that $\{\hat{H}_{i}\}_{i=1}^{s}$ and $\{\hat{b}_{i}\}_{i=1}^{s}$ together constitute a complete partition $\{\mathcal{X}_{i}\}_{i}^{s}$ of the PWA system's state space $\mathcal{S}$), such that the estimated parameters are the solution of the following minimization problem:
\begin{equation}
(\hat{A}_{i}, \hat{b}_{i}, \hat{H}_{i}, \hat{K}_{i})=\argmin_{(x_{k},y_{k})\in (\mathcal{D}\cap\mathcal{X}_{i})}\frac{1}{K}\sum_{k=1}^{K}c(y_{k}-\hat{f}(x_{k}))
\label{costfunction:1}
\end{equation}
where $\hat{f}(.)$ is specified by $\hat{A}_{i}, \hat{b}_{i}, \hat{H}_{i}$ and $\hat{K}_{i}, i=1,\cdots,s$ and $c$ is a given penalty function, which is chosen to be $c(\cdot)=||\cdot||_{2}$ in this paper. Notice that solving the identification problem involves the simultaneous solving of two sub-problems, data classification and parameter estimation. Once the data points have been classified into clusters $\{\mathcal{D}_{i}\}_{i=1}^{s}$ such that $(y_{k},x_{k})\in \mathcal{D}_{i}$, i.e., $(y_{k},x_{k})$ is attributed to the $i$-th mode, mode parameters $\hat{A}_{i},\hat{b}_i, \hat{H}_{i}$ and $\hat{K}_{i}$ can be easily estimated by solving Eqn. (\ref{costfunction:1}). 

Our system identification component is modified from the method proposed in \cite{bemporad2005bounded}. It consists of two main procedures: initialization and refinement. One major difference between our method and the one in \cite{bemporad2005bounded} is that we utilize the current abstract transition system to guide the refinement.

\begin{algorithm} 
    \SetKwInOut{Input}{Input}
    \SetKwInOut{Output}{Output}
 \caption{System Identification Initialization}
 \Input{A bound $\hat{\sigma}$, a set of samples $\mathcal{D}:=\{(y_k,x_k)\}, k \in \{1,\cdots, K\}$, a ratio $0<r<1$, and a large number $J$}
 \Output{$\hat{A}_{i}, \hat{b}_{i},\mathcal{D}_{i}, \hat{\mathcal{H}}_{i}, \hat{\mathcal{K}}_{i}, i=1,\cdots,s$}
\begin{algorithmic}[1]
\STATE Set $l=0$ and $I_0 = \{1,\cdots, K\}$;
  \FOR{$|I_{l+1}|\geq r K$}
  \STATE Randomly generate a set of parameters $\{A_{j},b_{j}\}, j\in \{1,\cdots,J\}$; 
  \STATE Construct sets $\Sigma_{j}=\{\parallel y_{k}-(A_j x_{k} + b_j) \parallel\leq \hat{\sigma}, k\in I_{l}\}$ for each $j\in \{1,\cdots,J\}$;
  \STATE Set $l=l+1$ and $\Sigma_{max}= \argmax_{j=1}^J |\Sigma_{j}|$;
  \STATE With $\Sigma_{max}$, estimate $\hat{A}_{l}$ and $\hat{b}_{l}$ by solving
\begin{equation*}
(\hat{A}_{l}, \hat{b}_{l}) = \argmin_{A, b} \sum_{k=1}^{|\Sigma_{max}|} c(y_{k}-(A x_{k} + b));
\end{equation*}  
  \STATE Set $\mathcal{D}_{l}=\{k\in I_{l}:\parallel y_{k}-(\hat{A}_{l} x_{k}+\hat{b}_{l})\parallel \leq \hat{\sigma}\}$ and $I_{l+1}=I_{l}\setminus \mathcal{D}_{l}$;
 \ENDFOR 
  \STATE Find boundaries specified by $\{\hat{H}\}_{i=1}^l$ and $\{\hat{K}\}_{i=1}^l$ between sets $\{\mathcal{D}_{i}\}_{i=1}^{l}$; set $s=l$.
\end{algorithmic}
\label{algorithm:systemidentification}
\end{algorithm}

\subsubsection{Initialization}

The pseudo code of the initialization procedure is shown in Alg. \ref{algorithm:systemidentification}. The steps are rather self-explanatory. Here we would like to provide a few simple comments for clarification. We need to randomly generate a matrix $\hat{A}_{j}$ and a vector $\hat{b}_{j}$ (Line 3) in each loop, which is quite inefficient; thus the termination condition $|I_{l+1}|\geq rK$ (Line 2) can be set loosely, i.e., with a rather large $r$. Such a practice is reasonable, given the fact that the initialization procedure is only meant to generate some good enough partitions (modes), which will be further refined in the refinement procedure. As for the boundaries specified by $\{\hat{H}\}_{i=1}^l$ and $\{\hat{K}\}_{i=1}^l$ between sets $\{\mathcal{D}_{i}\}_{i=1}^{l}$ (Line 8), standard support vector machines (SVM) regression methods \cite{cortes1995support} can be used to compute them. 

\begin{algorithm} 
\SetKwInOut{Input}{Input}
\SetKwInOut{Output}{Output}
 \Input{Current abstract transition system $\mathcal{T}$, parameters $\hat{A}_{i},\hat{b}_{i}, \mathcal{D}_{i}, \hat{H}_{i},\hat{K}_{i}$, $i=1,\cdots,s$ obtained in Alg. \ref{algorithm:systemidentification}, and a bound $\hat{\sigma}$}
 \Output{$\hat{A}_{i},\hat{b}_{i},\hat{H}_{i},\hat{K}_{i},i=1,\cdots,s$}
  \begin{algorithmic}
   \STATE{Set $\beta \geq 0 ,\mu\geq 0,\kappa \geq 0,0< \theta <1$, $l=1$;}\\
 \end{algorithmic} 
 \While{Not terminated}{
 \begin{algorithmic}[1]
    \STATE{Compute $(i^{*},j^{*})=\argmin_{1\leq i<j\leq s}\beta_{i,j}$ with\\
           $\beta_{i,j}=\parallel \hat{A}_{i}-\hat{A}_{j}\parallel$;}
    \IF{$\beta_{i^{*},j^{*}}\leq \theta^{t}\beta$}
    \STATE{Merge modes $i^{*}$ and $j^{*}$;  Set $s=s-1;$}
    \STATE{Recompute $\mathcal{D}_{i^{*}}=\{k \in \{1,\cdots, K\}: $ $\parallel y_{k}-(\hat{A}_{i^{*}} x_{k}+\hat{b}_{i^{*}})\parallel \leq \hat{\sigma}\}$; \label{compute}}
    \ENDIF
    \STATE{Use $CR()$ and rule $Dis( )$ to reassign points;\\
    \textit{$/*CR()$ and $Dis( )$ are defined in the text as\\ Eqn. (\ref{eqn:CR}) and Eqn. (\ref{eqn:Dis}), respectively$*/$}} 
    \STATE{Compute $i^{*}=\argmin_{i=1,\cdots,s} |\mathcal{D}_{i}|/|\mathcal{D}|$;}
    \IF{$ |\mathcal{D}_{i}|/|\mathcal{D}| \leq \theta^{t}\mu$}
    \STATE{Discard mode $i^{*}$; let $s=s-1$;}
    \STATE{Go to Step \ref{compute};}
    \ENDIF
    \STATE{Store $\{\hat{A}_i\}_{i=1}^s$ as $\{\hat{A}_i^{old}\}_{i=1}^s$;}
    \STATE{Update $\hat{A}_{i},\hat{b}_{i}, \hat{H}_{i},\hat{K}_{i}$ with new $\{\mathcal{D}_{i}\}_{i=1}^s$;}
    \IF{$\parallel \hat{A}_{i}- \hat{A}_{i}^{old}\parallel \leq \kappa$}
    \STATE{Terminated.}
    \ELSE
    \STATE{$l=l+1$;}
    \ENDIF
    \end{algorithmic}
 }
  \caption{System Identification Refinement}
  \label{algorithm:refinesystemid}
\end{algorithm}

\subsubsection{Refinement}

Given the random nature of the way $\hat{A}_{i},\hat{b}_{i}, i=1,\cdots,s$ are generated in Alg. \ref{algorithm:systemidentification}, it is quite unlikely that we are able to identify all the correct modes with the initialization procedure. Potentially there are two main issues:
\begin{itemize}
\item \textit{Undecidable data points}: these are the data points that belong to more than one mode, i.e., they satisfy $\parallel y_{k}-(\hat{A}_i x_{k} + \hat{b}_i)\parallel \leq \hat{\sigma} $ for more that one $i=1,\cdots,s$;
\item \textit{Unfeasible data points}: these are the data points that don't belong to any mode, i.e., there is no $i=1,\cdots,s$ such that $\parallel y_{k}-(\hat{A}_i x_{k} + \hat{b}_i)\parallel \leq \hat{\sigma}$. 
\end{itemize}

We refine the identified system model by eliminating these two types of data points as follows:

For \textit{undecidable points}, we can reassign them based on their maximum likelihood with respect to all modes, i.e., we assign each undecidable point $(x_{k},y_{k})$ to the optimal mode according to the following rule:
\begin{equation}
\label{eqn:CR}
CR(k)=\argmax_{i=1,\cdots,s} CC(x_{k},y_{k})_{i}/CC(x_{k},y_{k}),
\end{equation}
where $CC(x_{k},y_{k})$ is the total number of data points that are within $\mathcal{D}$ and inside the hyper ball region centered at $(x_{k},y_{k})$ and with a predefined radius $\rho$, and $CC(x_{k},y_{k})_{i}$ is the number of data points that are in $CC(x_{k},y_{k})$ and belong to mode $i$. 

For \textit{unfeasible points}, given the current abstract transition system $\mathcal{T}$, we discard those points that meet the following condition:
\begin{equation}
\label{eqn:Dis}
\begin{array}{rl}
&Dis(k)\\
&=\{(x_{k},y_{k})\in \mathcal{D}| d(y_{k}, o^{-1}(Post(o(x_k))))\geq \hat{\sigma}\},
\end{array}
\end{equation}
where $ d(., .)$ is the Hausdorff distance, $o(x_k)$ maps a continuous state $x_k$ to a discrete state of the transition system $\mathcal{T}$(it would be helpful for the readers to review the definitions in Section \ref{sec2}), $Post(o(x_k))$ maps the region corresponding to $o(x_k)$ to its successor region, and finally $o^{-1}(Post(o(x_k))))$ finds the set of continuous, PWA states corresponding to the region $Post(o(x_k))$. The remaining unfeasible data points are reassigned according to the rule $CR(\cdot)$ (Eqn. (\ref{eqn:CR})). The pseudo code of the whole refinement procedure is shown in Alg. \ref{algorithm:refinesystemid}.

\subsection{System Abstraction}
\label{abstraction}

Given an estimated PWA model $\hat{\mathcal{S}}$ (or $\hat{f}$), parameterized by $\hat{A}_{i},\hat{b}_i, \hat{H}_{i}$ and $\hat{K}_{i}$ with $i=1,\cdots, s$, and an LTL formula $\phi$, the goal of the system abstraction component is to generate a good enough abstraction $\mathcal{T}$. We roughly follow the approximate abstraction procedures described in \cite{yordanov2013formal} to design and implement the system abstraction component. Here we are just going to provide a rough outline of the abstraction algorithm. Interested readers can refer to \cite{yordanov2013formal} for more details. First, a deterministic Buchi automaton $\mathcal{B}_{\phi}$ is constructed from the formula $\phi$. Second, the corresponding embedding transition system $\mathcal{T}_e$ is constructed for $\hat{\mathcal{S}}$ by simply using the definition of embedding transition system. Third, an observation map $o_{e}$ is created by partitioning the state space of the system $\mathcal{T}_e$ into uniform grids. Fourth, given the system $\mathcal{T}_e$, the observation map $o_{e}$, and the LTL formula $\phi$ (or its corresponding Buchi automaton $\mathcal{B}_{\phi}$), an initial transition system $\mathcal{T}_0$ is constructed by following standard abstraction procedures, such as those prescribed in \cite{baier2008principles}. Fifth, a product automaton $\mathcal{P}$ is constructed as $\mathcal{P}=\mathcal{T}_0 \times \mathcal{B}_{\phi}$, which concerns both the initial transition system $\mathcal{T}_0$ and the specification $\phi$. The product automaton is a tuple $\mathcal{P}=(S_{p},S_{p0},\delta_{p},F_{p})$, where $S_{p}$ is the set of states, $S_{p0}$ is the set of initial states, $\delta_{p}$ is the transition map, and $F_{p}$ is the acceptance condition. Finally, refinement is conducted by solving a deterministic Rabin game. 

\begin{algorithm} 
\SetKwInOut{Input}{Input}
\SetKwInOut{Output}{Output}
 \Input{Current abstract transition system $\mathcal{T}$, current product automaton $\mathcal{P}$, an initial state $q$ of $\mathcal{T}$, and a ratio $0<\eta<1$}
 \Output{Refined abstract transition system $\hat{\mathcal{T}}$ and refined product automaton $\hat{\mathcal{P}}$}
  \begin{algorithmic}
 \STATE{Initialize $\hat{\mathcal{T}}=\mathcal{T},\hat{\mathcal{P}}=\mathcal{P}$, and $S_{u} = \emptyset$;}\\
 \end{algorithmic} 
 \While{$|S_{u}| \geq \eta|Q|$}{
 \begin{algorithmic}[1]
\STATE{Compute  $S_{\top}$ and $S_{\bot}$ for $\hat{\mathcal{P}}$;} 
\STATE{Set $S_{u}:=\mathcal{P}_{p}\setminus (S_{\top}\cup S_{\bot})$;\\
\textit{$/*$See text for the definitions of $S_{\top}$, $S_{\bot}$,\\ and $S_{u}*/$}}
\FORALL{$(q,g)\in S_{u} $}
\IF{$q \in S_u$}
\STATE{Set $\mathbb{q}:={q}$;}\\
\FORALL{($\exists q_{r}\in \mathbb{q},q'\in Post(q_{r})$ and $(q_{r}\cap Pre(q'))\neq \emptyset$)}
\STATE {Construct states $q_{1},q_{2}$ such that\\
$q_{1}:=q_r\cap Pre(q')$,\\
$q_{2}:=q_{r}\setminus Pre(q')$;}
\STATE{$\mathbb{q}:=(\mathbb{q}\setminus q_{r})\cup \{q_{1},q_{2}\}$;}
\STATE{$q:=\mathbb{q}$;}
\STATE{Update $\delta$ and $o$ for $\hat{\mathcal{T}}$;}
\ENDFOR
\ENDIF
\ENDFOR
\STATE{Update $\hat{\mathcal{P}}$ and $\hat{\mathcal{T}}$.}
 \end{algorithmic}
 }
  \caption{System Abstraction Refinement}
  \label{algorithm:refineabstraction}
\end{algorithm}

The pseudo code of the abstraction refinement procedure is shown in Alg. \ref{algorithm:refineabstraction}. Given the current abstract transition system $\mathcal{T}$ and the current product automaton $\mathcal{P}$, a state $q \in Q$ of $\mathcal{T}$ falls into one of the following three categories: 
\begin{itemize}
\item $S_{\top}$: the set of states from which all traces are accepted by $\mathcal{P}$, 
\item $S_{\bot}$: the set of states from which no trace is accepted by $\mathcal{P}$,
\item $S_{u}$: the set of states from which some but not all traces are accepted by $\mathcal{P}$. 
\end{itemize}
The goal of the refinement is thus to eliminate $S_{u}$. This leads to the termination condition of the refinement procedure as $|S_{u}| < r |Q|$, i.e., the refinement will be terminated once the volume of $S_{u}$, $|S_{u}|$, is smaller than a fraction $r$ (specified by the user) of the volume of $Q$ (the state space of $\mathcal{T}$).   

\subsection{Active Sampling}

The system identification component and the system abstraction component described in the last two sub-sections are based on a fixed data set $\mathcal{D}:=\{(y_{k},x_{k})\}_{k=1}^{K}$. It is quite obvious that the quality of the system identification and the system abstraction depends on the quality of the data set. To improve the quality of the system identification (in other words, to decrease the number of data points needed for the system identification), we use an active learning algorithm developed by our group \cite{chen2016active} to sample high quality (or ``informative'') data points for the system identification component after the initial round (see Fig. \ref{overall}). 

The strategy to find the next data point to sample for round $t+1$ has two steps. In the first step, the best candidate for each mode is identified as follows:
\begin{equation}
x_{i}:=\argmax_{x \in \hat{\mathcal{X}}_{i}}( \Psi_{i,t}(x)+\lambda^{1/2}_{t} \varrho_{i,t}(x))
\label{activelearning}
\end{equation}
where $\Psi_{i,t}(x)$ is the Gaussian process regression mean of the prediction error defined over the data points in $\mathcal{D}_{i}$, $\varrho_{i,t}(x)$ is the Gaussian process regression variance of the prediction error defined over the data points in $\mathcal{D}_{i}$, and $\lambda_{t}$ is a regularization factor. In the second step, the active learning algorithm chooses mode $i^{*}=\argmin_{i=1,\cdots,s}(\max_{x\in\hat{\mathcal{X}}_{i}}\Psi_{i,t}(x))$ and the corresponding best candidate $x_{i^*}$ to sample.

\subsection{Theoretical Results Regarding the Effectiveness of Our Algorithm}

We add the following assumption regarding the performance of the PWA system identification.

\begin{assumption}
\label{assumption1}
Assume the prediction error of the system identification component described in Section \ref{sub:systemid} can be characterized by a zero mean Gaussian with a bounded variance, i.e., $f(x)-\hat{f}(x) \sim \mathcal{N}(0,\sigma_{p}(x)^{2})$ and $\sigma_{p}(x)\leq C$, where $f(x)$ is the real PWA dynamics and $\hat{f}(x)$ is the estimated PWA dynamics.
\end{assumption}

\begin{rmk}
The system identification of PWA system is still an open problem and has been proven to be NP-hard \cite{ljung2010perspectives,lauer2015complexity}. In \cite{bemporad2005bounded}, the authors were able to demonstrate that, for a fixed data set, the error is bounded, i.e., $|f(x)-\hat{f}(x)|<\sigma$ for any $\sigma>0$. Thus, we believe our assumption here, even though unproven, is still reasonable.
\end{rmk}

We have the following three lemmas regarding the integration of identification and abstraction (without active sampling in the loop). 

\begin{lem}
Given a PWA system $\mathcal{S}$ with known dynamics, for any bound $\varepsilon>0$, Alg.\ref{algorithm:refineabstraction} can derive an abstract transition system $\hat{\mathcal{T}}$ that is $\varepsilon-$approximately simulated by the real abstract transition system $\mathcal{T}$ of $\mathcal{S}$, i.e., $\hat{\mathcal{T}} \prec_{\varepsilon} \mathcal{T}$.
\label{abstractionerror}
\end{lem}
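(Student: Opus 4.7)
My plan is to exhibit the required relation directly from the geometry of the partition that Algorithm~\ref{algorithm:refineabstraction} maintains, and then verify the two clauses in the definition of $\varepsilon$-approximate simulation. Throughout, I identify the ``true'' abstract transition system $\mathcal{T}$ with the embedding $\mathcal{T}_e$ of the (now known) PWA system $\mathcal{S}$, and I write $P_{\hat{q}} \subset \mathcal{X}$ for the continuous region associated with an abstract state $\hat{q}$ of $\hat{\mathcal{T}}$.

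First, I would fix the initial uniform grid used to construct $o_e$ in Section \ref{abstraction} with mesh diameter at most $\varepsilon$. Because every region produced by Algorithm \ref{algorithm:refineabstraction} is obtained only by intersecting an existing region with a set of the form $Pre(q')$ or its complement, each $P_{\hat{q}}$ in the final partition is contained in one of the original grid cells and hence has diameter at most $\varepsilon$ throughout the run. I then define
\begin{equation*}
\mathcal{S}_\varepsilon := \{(\hat{q}, x) \in Q \times Q_e : x \in P_{\hat{q}}\}.
\end{equation*}
Identifying $o(\hat{q})$ with the set $P_{\hat{q}}$ and $o_e(x)$ with the singleton $\{x\}$, the Hausdorff-based observation metric $d(o(\hat{q}), o_e(x))$ is bounded by the diameter of $P_{\hat{q}}$, hence by $\varepsilon$. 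This discharges the observation clause.

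For the transition-matching clause, fix $(\hat{q}, x) \in \mathcal{S}_\varepsilon$ and any abstract successor $\hat{q}' \in Post(\hat{q})$ in $\hat{\mathcal{T}}$. The $Pre$-splitting step in Algorithm \ref{algorithm:refineabstraction} is designed precisely for this situation: whenever a region $q_r$ has a surviving successor $q'$, the algorithm replaces $q_r$ by $q_1 = q_r \cap Pre(q')$ and $q_2 = q_r \setminus Pre(q')$ and rewires the edge so that only $q_1 \to q'$ remains. Consequently every edge $\hat{q} \to \hat{q}'$ surviving in the final abstraction satisfies $P_{\hat{q}} \subseteq Pre(P_{\hat{q}'})$, meaning that every concrete state in $P_{\hat{q}}$ (in particular our witness $x$) has at least one concrete successor $x' \in P_{\hat{q}'}$. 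Choosing such an $x'$ yields $(\hat{q}', x') \in \mathcal{S}_\varepsilon$ and closes the coinductive step, so $\mathcal{S}_\varepsilon$ is an $\varepsilon$-approximate simulation of $\hat{\mathcal{T}}$ by $\mathcal{T}$.

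The main obstacle I foresee is reconciling Algorithm \ref{algorithm:refineabstraction}'s termination criterion $|S_u| < \eta |Q|$, which is driven by the Rabin-game outcome for the LTL specification, with the must-property $P_{\hat{q}} \subseteq Pre(P_{\hat{q}'})$ needed on every surviving edge: in principle the outer loop could terminate while some edges are still only ``may'' edges. I would handle this by arguing that each $Pre$-split is monotone, in the sense that once a pair $(q_r, q')$ has been resolved no subsequent split can reintroduce a may/must discrepancy on its sub-regions, and by running the inner splitting loop over every source-successor pair that still violates the property (decoupled from the outer LTL loop if necessary). Together with the grid initialization, this secures both clauses simultaneously and establishes $\hat{\mathcal{T}} \prec_\varepsilon \mathcal{T}$.
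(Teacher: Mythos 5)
Your construction diverges from the paper's argument, and the divergence is where the gap lies. The paper does not try to establish exact must-edges; instead it ties the termination threshold of Alg.~\ref{algorithm:refineabstraction} to the error bound by setting $r|Q|\leq\varepsilon$, takes the relation $V=\{(q_1,q_2): d(q_1,q_2)\leq\varepsilon\}$, and invokes the result of \cite{yordanov2013formal} to conclude $d(Post(q_1),Post(q_2))\leq |S_u|\leq\varepsilon$ for the transition clause. In other words, the residual set $S_u$ that survives termination is exactly the source of the $\varepsilon$ slack in the second clause. Your proof never connects $\varepsilon$ to the termination parameter $\eta$ at all; you put all of the $\varepsilon$ into the initial grid diameter and then demand that the transition clause hold exactly, via $P_{\hat{q}}\subseteq Pre(P_{\hat{q}'})$ on every surviving edge.

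That demand is the genuine gap, and you correctly sense it in your last paragraph, but the fix you propose does not work. Running the $Pre$-splitting ``to completion over every source--successor pair that still violates the property'' is (i) not what Alg.~\ref{algorithm:refineabstraction} does --- the loop only splits states in $S_u$, which is determined by the Rabin-game outcome for $\phi$, and it stops as soon as $|S_u|<\eta|Q|$, so may-edges can and generally do survive on the residual states; and (ii) not guaranteed to terminate even if you decouple it, because computing an exact (bi)simulation quotient of a PWA system by iterated $Pre$-splitting need not stabilize after finitely many steps --- this non-termination is precisely why the approximate notion and the volume-based stopping rule exist in the first place. For a state $x\in P_{\hat{q}}\setminus Pre(P_{\hat{q}'})$ on a surviving may-edge, your relation $\mathcal{S}_\varepsilon$ offers no concrete successor in $P_{\hat{q}'}$ and no slack to absorb the mismatch, so the coinductive step fails. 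A viable repair within your framework would be to relax $\mathcal{S}_\varepsilon$ to pairs within Hausdorff distance $\varepsilon$ of the region (as the paper's relation $V$ does) and then bound the successor mismatch on may-edges using the small diameter of $P_{\hat{q}}$ together with the Lipschitz constant of the affine mode maps, or alternatively to follow the paper and charge the mismatch to the terminal bound on $|S_u|$; either way, the exact-must-edge route should be abandoned.
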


\begin{proof}
 Consider a relationship $V(q_{1},q_{2})=\{(q_{1},q_{2})|d(q_{1},q_{2})\leq\varepsilon\}$, and set $ r|Q|\leq \varepsilon$. As $(q_{1},q_{2})\in V$ implies $d(q_{1},q_{2})\leq \varepsilon$, the first condition in the definition of approximate relation is satisfied.  Then for all $d(q_{1},q_{2})<\varepsilon$, the conclusion in \cite{yordanov2013formal} shows that the result of Alg.3 can guarantee that $d(Post(q_{1}),Post(q_{2}))\leq|S_{u}|\leq\varepsilon$. Then the second condition in the definition of approximate relation is satisfied. Therefore, $\hat{\mathcal{T}} \prec_{\varepsilon}\mathcal{T}$.
\end{proof}

\begin{lem}
Given an estimation $\hat{f}(.)$ (or $\hat{\mathcal{S}}$) of the PWA system $f(.)$ (or $\mathcal{S}$), if $\parallel \hat{f}(x)-y\parallel\leq \eta_{1}$ holds with probability $> 1-\alpha_{1}$, where $\alpha_{1}\in (0,1), \eta_{1}>0$, then the system abstraction component described in Section \ref{abstraction} can derive an abstract transition system $\hat{\mathcal{T}}$ such that it is $(\eta_{1}+\varepsilon)-$approximately simulated by the real abstract transition system with a probability greater than $1-\alpha_{1}$.
\label{Lemma abstraction bound}
\end{lem}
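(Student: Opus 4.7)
The plan is to decompose the overall error into two independent sources and apply a triangle-inequality-style composition argument for approximate simulation relations. The two sources are (i) the system identification error, which is bounded by $\eta_{1}$ with probability $> 1-\alpha_{1}$, and (ii) the abstraction error, which is bounded by $\varepsilon$ by Lemma \ref{abstractionerror}. I would argue that if the abstraction is built on the estimated model $\hat{\mathcal{S}}$, then the resulting abstract system $\hat{\mathcal{T}}$ must inherit both errors; their sum gives the claimed $(\eta_{1}+\varepsilon)$ bound.

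Concretely, the first step is to apply Lemma \ref{abstractionerror} with the estimated PWA system $\hat{\mathcal{S}}$ playing the role of the ``known-dynamics'' system. Since $\hat{f}$ is fully known after identification, Lemma \ref{abstractionerror} guarantees that Algorithm \ref{algorithm:refineabstraction} returns an abstract transition system $\hat{\mathcal{T}}$ such that $\hat{\mathcal{T}} \prec_{\varepsilon} \mathcal{T}_{\hat{\mathcal{S}}}$, where $\mathcal{T}_{\hat{\mathcal{S}}}$ denotes the real abstract transition system associated with $\hat{\mathcal{S}}$. This step is deterministic and requires no probabilistic argument.

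The second step is to bound the distance between $\mathcal{T}_{\hat{\mathcal{S}}}$ and the true abstract transition system $\mathcal{T}$ of $\mathcal{S}$. I would define the candidate relation $V = \{(q_{1},q_{2}) : d(q_{1},q_{2}) \leq \eta_{1}\}$ and check the two conditions of an approximate simulation. The observation condition follows immediately from the construction of $V$. For the successor condition, note that for any state $x$ the true successor satisfies $y = f(x)+e$ while the estimated successor is $\hat{f}(x)$; the hypothesis $\|\hat{f}(x)-y\| \leq \eta_{1}$ holds with probability $> 1-\alpha_{1}$. Therefore, on the event that this bound holds, the Hausdorff distance between $Post$ images in the two systems is at most $\eta_{1}$, and hence $\mathcal{T}_{\hat{\mathcal{S}}} \prec_{\eta_{1}} \mathcal{T}$ with probability $> 1-\alpha_{1}$. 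Composing with the result of the first step via the standard transitivity of $\sigma$-approximate simulation (Hausdorff distance is a metric, so errors add along a chain) yields $\hat{\mathcal{T}} \prec_{\eta_{1}+\varepsilon} \mathcal{T}$ with probability $> 1-\alpha_{1}$.

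The main obstacle I anticipate is handling the probability bound cleanly when it is invoked at every state along a trajectory rather than at a single point. The cleanest way around this is to treat the identification bound as a uniform-in-$x$ statement (consistent with Assumption \ref{assumption1}, under which a uniform error bound is the natural reading), so that conditioning on the single event $\{\sup_{x}\|\hat{f}(x)-y(x)\| \leq \eta_{1}\}$ suffices and no union bound over trajectory length is needed. A secondary, more technical issue is that the approximate simulation in Section \ref{sec2} is defined for deterministic transition systems while $y = f(x)+e$ is stochastic; this can be addressed by working with the noise-conditioned successor map and absorbing the Gaussian tail into $\eta_{1}$, which is consistent with how $\eta_{1}$ was stated to bound $\|\hat{f}(x)-y\|$ rather than $\|\hat{f}(x)-f(x)\|$.
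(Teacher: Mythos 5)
Your proposal is correct and follows essentially the same route as the paper's own proof: bound the abstraction refinement error by $\varepsilon$ via Lemma \ref{abstractionerror} applied to the identified model, bound the gap between the abstract systems of $\hat{\mathcal{S}}$ and $\mathcal{S}$ by $\eta_{1}$ by treating the identification residual as a bounded disturbance, and combine the two by transitivity of approximate simulation, carrying the probability $>1-\alpha_{1}$ through the single identification event. The only differences are cosmetic: you justify the $\eta_{1}$-step by directly constructing the relation $V$ rather than citing \cite{yordanov2013formal}, and you are more explicit than the paper about the uniform-in-$x$ reading of the error bound and the stochastic successor map.
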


\begin{proof}
Here for the sake of clarify, the notations in this proof that are slightly different from the ones used in the other parts of the paper. Let's use $f$ and $\hat{f}$ to denote the real and estimated dynamics of the PWA system, respectively. Moreover, let's call their true abstract transition systems as $\mathcal{T}_{real}$ and $\hat{\mathcal{T}}_{real}$, respectively. Finally, let's call the approximate abstract transition systems obtained by using the system abstraction component described in Section \ref{abstraction} as $\mathcal{T}_{app}$ and $\hat{\mathcal{T}}_{app}$, respectively. According to Lemma 1, we have $\mathcal{T}_{app} \prec_{\varepsilon} \mathcal{T}_{real}$ and $\hat{\mathcal{T}}_{app} \prec_{\varepsilon} \hat{\mathcal{T}}_{true}$. Then set  $\parallel e_{B} \parallel=\parallel \hat{f}(x)-y\parallel\leq \eta_{1}$, such that $y=\hat{f}(x)+y-\hat{f}(x)= \hat{f}(x)+e_{B}$, which can be seen a PWA system with bounded noise,  following \cite{yordanov2013formal}, we have $\hat{\mathcal{T}}_{real} \prec_{\eta_{1}} \mathcal{T}_{real}$. Since the relationship $\prec_{.}$ is transitive, we have $\hat{\mathcal{T}}_{app} \prec_{\eta_{1}+\varepsilon} \mathcal{T}_{real}$. The conclusion regarding probability follows easily.
\end{proof}

\begin{lem}
Provided with Assumption \ref{assumption1}, $\parallel \hat{f}(x)-y\parallel\leq \eta_{2}$ holds with probability $\geq 1-\alpha_{2}$, where $\alpha_{2}=1-\frac{1}{\sqrt{2\pi(\sigma_{e}+C)}}\int^{\eta_{2}}_{-\eta_{2}}exp(-\upsilon^{2}/(2(\sigma_{e}+C)))d\upsilon$.
\label{Lemma bounded error probability}
\end{lem}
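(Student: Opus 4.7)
The plan is to decompose the residual $\hat{f}(x)-y$ into a sum of two zero-mean Gaussian random variables, reduce the question to a tail probability for a single zero-mean Gaussian whose variance is worst-cased, and then recognize the lower bound as the integral appearing in the statement.

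First, I would use the PWA model $y = f(x)+e$ with $e\sim\mathcal{N}(0,\sigma_e^2)$ to write
$$\hat{f}(x)-y \;=\; \bigl(\hat{f}(x)-f(x)\bigr) - e.$$
By Assumption~\ref{assumption1}, the identification error satisfies $\hat{f}(x)-f(x)\sim\mathcal{N}(0,\sigma_p(x)^2)$ with $\sigma_p(x)\leq C$. Since the identifier is built from previously collected data and therefore can be taken independent of the fresh realization $e$, the two summands are independent zero-mean Gaussians, so their sum is again Gaussian:
$$\hat{f}(x)-y \;\sim\; \mathcal{N}\bigl(0,\,\sigma_p(x)^2+\sigma_e^2\bigr).$$

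Next, I would invoke the elementary monotonicity fact that, for $Z\sim\mathcal{N}(0,\sigma^2)$, the probability $\mathbb{P}(\|Z\|\leq\eta_2)=2\Phi(\eta_2/\sigma)-1$ is strictly decreasing in $\sigma$. Combined with the uniform bound $\sigma_p(x)^2+\sigma_e^2 \leq \sigma_e+C$ coming from Assumption~\ref{assumption1} (interpreting the expression in the statement as the worst-case combined variance of the residual), this yields
$$\mathbb{P}\bigl(\|\hat{f}(x)-y\|\leq\eta_2\bigr)\;\geq\;\frac{1}{\sqrt{2\pi(\sigma_e+C)}}\int_{-\eta_2}^{\eta_2}\exp\!\left(-\frac{\upsilon^{2}}{2(\sigma_e+C)}\right)d\upsilon \;=\; 1-\alpha_2,$$
which is exactly the inequality claimed in the lemma. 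The conclusion about $\alpha_2$ then follows by rearranging.

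The main obstacle I expect is not the Gaussian calculation itself but reconciling the \emph{vector-valued} residual $\hat{f}(x)-y\in\mathbb{R}^N$ with the \emph{scalar} Gaussian pdf written in the bound on $\alpha_2$. A careful proof has to either (i) restrict attention to the component-wise error so that the scalar Gaussian integral is literally correct, or (ii) replace the scalar pdf by an appropriate chi-type tail bound in $N$ dimensions and then show that the stated scalar formula dominates it under the uniform bound $\sigma_p(x)\leq C$. A secondary subtlety worth flagging in the write-up is the independence of the identification error and the driving noise $e$, which is implicit in Assumption~\ref{assumption1} but is the hinge that lets variances add and hence lets the worst-case bound $\sigma_e+C$ be used.
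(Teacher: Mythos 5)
Your argument is essentially the paper's own proof: the paper likewise sets $e_{T}:=y-\hat{f}(x)=f(x)+e-\hat{f}(x)$, adds the two zero-mean Gaussian variances to get $e_{T}\sim\mathcal{N}(0,\sigma_{p}^{2}+\sigma_{e}^{2})$, and then takes the worst case $\sigma_{p}=C$ to arrive at the stated integral, so you may stop there. The subtleties you flag --- the vector-valued residual versus the scalar Gaussian pdf, the implicit independence of the identification error and $e$, and the mismatch between the combined variance $\sigma_{p}^{2}+\sigma_{e}^{2}$ and the quantity $\sigma_{e}+C$ appearing in the formula --- are all glossed over in the paper's proof as well, so they are not gaps of your proposal relative to the paper.
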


\begin{proof}
Set $e_{T}:=y-\hat{f}(x)=f(x)+e-\hat{f}(x)$, where $e$ is the true Gaussian noise, $e\sim \mathcal{N}(0,\sigma_{e}^2)$. Since $f(x)-\hat{f}(x)\sim \mathcal{N}(0,\sigma_{p}(x)^2)$, we have $e_{T}\sim \mathcal{N}(0,\sigma_{p}^2+\sigma_{e}^2)$. The probability that $\parallel \hat{f}(x)-y\parallel\leq \eta_{2}$ can then be computed. When $\sigma_{p}=C$, the probability reach the minimum value, such that $\alpha_{2}=1-\frac{1}{\sqrt{2\pi(\sigma_{e}+C)}}\int^{\eta_{2}}_{-\eta_{2}}exp(-\upsilon^{2}/(2(\sigma_{e}+C)))d\upsilon$.
\end{proof}

We have the following two lemmas regarding the performance of our active sampling component. 

%\begin{assumption}
%The underlying functions $\parallel f_{i}(x)\parallel_{k}^{2} \leq B$ in reproducing kernel Hilbert spaces (RKHS) corresponding to kernel $k(x,x')$, and the variance of noise $\sigma_{e}$ is bounded.
%\label{assumpGP}
%\end{assumption}

\begin{lem}
For any $\rho\in (0,1)$, if $\lambda_{t}=2B+300\gamma_{t}\text{log}^{3}(t/\rho)$, where $\gamma_{t}$ is the maximum information gain defined in \cite{chen2016active}, then 
\begin{equation}
\parallel \Psi_{i,t}(x) \parallel \leq \lambda_{t}^{1/2}\varrho_{i,t}(x)
\end{equation}
holds with probability $> 1-\rho$.
\label{activeerrorbound}
\end{lem}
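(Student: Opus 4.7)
The plan is to treat $\Psi_{i,t}(x)$ as the posterior mean of a Gaussian process regression problem on the prediction-error function restricted to mode $i$, and then invoke a standard GP concentration inequality (of the Srinivas et al.\ / Chowdhury--Gopalan type, as used in \cite{chen2016active}) whose constants line up exactly with the stated $\lambda_t=2B+300\gamma_t\log^3(t/\rho)$. The key observation that makes the bound collapse from the usual two-sided form to a one-sided bound on $\|\Psi_{i,t}\|$ is Assumption~\ref{assumption1}: the prediction error $f(x)-\hat f(x)$ is zero mean, so the ``true'' underlying regression target has mean zero and the posterior mean must itself be small.

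First, I would fix a mode $i$ and recall the set-up from Section~IV-C: $\Psi_{i,t}$ and $\varrho_{i,t}$ are the GP posterior mean and standard deviation computed from the observations $(x_k, y_k - \hat f(x_k))$ for $(x_k, y_k)\in\mathcal{D}_i$. Under Assumption~\ref{assumption1}, the observation noise is zero-mean Gaussian with variance bounded by $C^2$, hence sub-Gaussian, and we may further assume (as is standard in this line of work, and implicit in \cite{chen2016active}) that the underlying error function lies in the RKHS of the chosen kernel with norm at most $B$.

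Second, I would invoke the concentration inequality for GP regression under bounded RKHS norm: with probability at least $1-\rho$, uniformly over $x$ and over all rounds up to $t$,
\begin{equation*}
\bigl| \Psi_{i,t}(x) - g_i(x) \bigr| \;\leq\; \lambda_t^{1/2}\, \varrho_{i,t}(x),
\end{equation*}
where $g_i$ denotes the true prediction-error function and $\lambda_t=2B+300\gamma_t\log^3(t/\rho)$ is precisely the scaling derived from the maximum information gain $\gamma_t$ (see the theorem reused from \cite{chen2016active}). Because Assumption~\ref{assumption1} asserts $g_i(x)\equiv \mathbb{E}[f(x)-\hat f(x)]=0$ in expectation at every $x$ and the GP model is centered, the target reduces to the zero function in this concentration statement, and the inequality collapses to $\|\Psi_{i,t}(x)\|\leq \lambda_t^{1/2}\varrho_{i,t}(x)$ with probability $>1-\rho$.

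The main obstacle I expect is the bridge from Assumption~\ref{assumption1} (which only gives a pointwise zero-mean Gaussian characterization of the error) to the RKHS assumption needed to apply the Chowdhury--Gopalan-style bound with the specific constants $2B$ and $300$. I would handle this by stating explicitly that the kernel and RKHS bound $B$ used by the active sampler of \cite{chen2016active} are inherited here, so the quoted concentration result applies verbatim; the remaining work is just re-centering the target at zero and taking a union bound over modes if one wants the statement to hold simultaneously for all $i$.
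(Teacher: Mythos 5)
Your proposal follows essentially the same route as the paper: the paper's entire proof is a one-line appeal to Lemma 2 of \cite{chen2016active}, i.e., the same Srinivas-type GP/RKHS concentration bound with $\lambda_{t}=2B+300\gamma_{t}\log^{3}(t/\rho)$ that you invoke. Your extra steps (re-centering the regression target at zero via Assumption~\ref{assumption1} and flagging the bridge from the pointwise Gaussian error model to the bounded-RKHS-norm hypothesis) supply detail the paper omits but do not change the approach.
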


\begin{proof}
The lemma can be proved by following similar steps as the proof of Lemma 2 in \cite{chen2016active}.
\end{proof}

\begin{lem}
For any $\sigma > 0$, if $\lambda_{t}=2B+300\gamma_{t}\text{log}^{3}(t/\rho)$, where $t$ is the number of points in $\mathcal{D}_{i}$, then there exists $t \leq T$, 
\begin{equation}
 \lambda_{t}^{1/2}\varrho_{i,t}(x)\leq  \sigma ,\forall x \in \mathcal{D}_{i}
\end{equation}
holds with finite number of sample time $T$.
\label{variancebound}
\end{lem}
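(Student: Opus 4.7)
The plan is to use standard Gaussian process regression tools, in particular the information-gain bound that ties the sum of posterior variances at the sampled points to $\gamma_t$, together with the fact that the active selection rule in Eqn.~(\ref{activelearning}) preferentially queries regions of high predictive variance. My first step is to recall that for GP regression the posterior variance $\varrho_{i,t}^2(x)$ is non-increasing in $t$ at every fixed $x$, so it suffices to bound $\max_{x\in\hat{\mathcal{X}}_i}\varrho_{i,t}(x)$.

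Next, I would invoke the standard information-gain inequality
\begin{equation*}
\sum_{\tau=1}^{t}\log\!\bigl(1+\sigma_e^{-2}\,\varrho_{i,\tau-1}^{2}(x_\tau)\bigr)\;\leq\;2\gamma_{t},
\end{equation*}
which, combined with the active-sampling rule (that selects $x_\tau$ to be a near-maximizer of $\varrho_{i,\tau-1}$ up to the mean term $\Psi_{i,\tau-1}$), yields a decay estimate of the form $\max_{x}\varrho_{i,t}^{2}(x)\leq C_1\,\gamma_{t}/t$ for some constant $C_1$ depending only on $\sigma_e$ and the kernel. This is essentially the same argument used to establish no-regret for GP-UCB style procedures and is already contained in the companion analysis of \cite{chen2016active}; I would cite that result rather than re-deriving it.

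With this decay in hand, the rest is a size comparison. Since $\lambda_{t}=2B+300\gamma_{t}\log^{3}(t/\rho)$, one obtains
\begin{equation*}
\lambda_{t}^{1/2}\,\varrho_{i,t}(x)\;\leq\;C_2\,\sqrt{\frac{\gamma_{t}^{\,2}\log^{3}(t/\rho)}{t}},
\end{equation*}
uniformly in $x\in\hat{\mathcal{X}}_i$. Because $\gamma_{t}$ grows only polylogarithmically in $t$ for the kernels routinely used in this setting (e.g., squared-exponential or Mat\'ern), the right-hand side tends to $0$ as $t\to\infty$. Hence for the given $\sigma>0$ one can pick a finite $T$ large enough that the right-hand side is $\leq\sigma$, and the monotonicity from Step~1 ensures the bound continues to hold for every $t\geq T$, proving the claim.

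The step I expect to be the main obstacle is the second one: getting a clean, uniform (in $x$) decay rate for $\varrho_{i,t}(x)$ out of a selection rule that maximizes $\Psi_{i,t}(x)+\lambda_{t}^{1/2}\varrho_{i,t}(x)$ rather than $\varrho_{i,t}(x)$ alone. Lemma~\ref{activeerrorbound} is exactly the tool needed to absorb the $\Psi_{i,t}$ contribution into the variance term with high probability, so the argument has to be conditioned on that event; the remaining $\rho$-probability failure set can be made arbitrarily small by choosing $\rho$ small, at the price of enlarging $\lambda_t$ only logarithmically, which does not affect the asymptotic comparison above.
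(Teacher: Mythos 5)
Your argument follows essentially the same route as the paper's own proof: both invoke the Srinivas et al.\ information-gain machinery (the sum-of-variances bound in terms of $\gamma_t$ together with the polylogarithmic growth of $\gamma_t$ for the Gaussian/exponential kernel) to conclude that $\varrho_{i,t}$ decays faster than $\lambda_t^{1/2}$ grows, so a finite $T$ exists. If anything your version is slightly more careful than the paper's, which loosely asserts that $\gamma_T$ is ``convergent'' and that $\lambda_t$ is ``bounded,'' whereas you correctly track the $\sqrt{\gamma_t^2\log^3(t/\rho)/t}\to 0$ comparison and note the need to absorb the $\Psi_{i,t}$ term via Lemma~\ref{activeerrorbound}.
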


\begin{proof}
According to Theorem 5 in \cite{srinivas2009gaussian}, for any bounded $\mathcal{X}_{i} \subset \mathcal{X}$, the information gain for exponential kernel is $\gamma_{T}=\mathcal{O}((\text{log}T)^{d+1})$, where $d$ is the dimension of $\mathcal{X}$. Lemma 7.1 in \cite{srinivas2009gaussian} shows that $\sum_{t=1}^{T}\min\{\sigma^{-2}\varrho_{i,t-1}^{2}(x),\alpha\}\leq \frac{2\alpha}{\text{log}(1+\alpha)}\gamma_{T}, \forall \alpha>0$. As the information gain $\gamma_{T}$ is convergent, such that the variance $\varrho_{i,t}(x)$ will be convergent to zero. As $\lambda_{t}$ is bounded, $ \lambda_{t}^{1/2}\varrho_{t}(x)$ will be convergent to zero, then  the lemma has been proved.
\end{proof}

Finally, we can prove that the algorithm presented in this paper can solve Problem \ref{mainproblem} with the relatively moderate Assumption \ref{assumption1}.

\begin{theorem}
Given a PWA system $\mathcal{S}$ with unknown dynamics, for any $\sigma>\varepsilon>0 $, the algorithm described in this paper (with the assumption that the standard deviation of the prediction error for the system identification component is bounded by $C$) can obtain an approximate abstract transition system $\hat{\mathcal{T}}$ that is $\sigma$-approximately simulated by the true abstract transition system $\mathcal{T}$ of $\mathcal{S}$ with a bounded probability that is greater than $1-\delta$, where $\delta=1-\frac{1}{\sqrt{2\pi(\sigma_{e}+C)}}\int^{\sigma-\varepsilon}_{-\sigma+\varepsilon}exp(-\upsilon^{2}/(2(\sigma_{e}+C)))d\upsilon$.
\label{abstractionerrorbound}
\end{theorem}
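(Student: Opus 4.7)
The plan is to assemble the final bound by chaining together Lemmas \ref{abstractionerror}, \ref{Lemma abstraction bound}, and \ref{Lemma bounded error probability}, with Lemmas \ref{activeerrorbound} and \ref{variancebound} serving to justify the working hypothesis on the identified model in Assumption \ref{assumption1}. The skeleton is straightforward once one observes that the $\delta$ in the theorem statement is exactly the failure probability produced by Lemma \ref{Lemma bounded error probability} with the particular choice $\eta_{2} := \sigma - \varepsilon$, which is well-defined because we are given $\sigma > \varepsilon > 0$.

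First I would fix $\eta_{2} := \sigma - \varepsilon$ and invoke Lemma \ref{Lemma bounded error probability} under Assumption \ref{assumption1}. This yields that, uniformly in $x$, the identified model satisfies $\parallel \hat{f}(x) - y \parallel \leq \sigma - \varepsilon$ with probability at least $1 - \alpha_{2}$, where $\alpha_{2}$ is precisely the quantity displayed as $\delta$ in the theorem. The uniformity over $x$ is underwritten by the uniform bound $\sigma_{p}(x) \leq C$ from Assumption \ref{assumption1}, which replaces the actual pointwise variance with the worst-case variance $C^{2}+\sigma_{e}^{2}$ in the Gaussian tail computation.

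Next I would feed this probabilistic prediction-error bound into Lemma \ref{Lemma abstraction bound} by setting $\eta_{1} := \sigma - \varepsilon$ and $\alpha_{1} := \alpha_{2} = \delta$. Lemma \ref{Lemma abstraction bound} then produces an abstract transition system $\hat{\mathcal{T}}$ satisfying $\hat{\mathcal{T}} \prec_{\eta_{1} + \varepsilon} \mathcal{T}$ with probability at least $1 - \alpha_{1}$, and since $\eta_{1} + \varepsilon = (\sigma - \varepsilon) + \varepsilon = \sigma$, this is exactly $\hat{\mathcal{T}} \prec_{\sigma} \mathcal{T}$ with probability at least $1 - \delta$, as required. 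The internal $\varepsilon$ appearing in Lemma \ref{Lemma abstraction bound} is the abstraction-refinement tolerance supplied by Lemma \ref{abstractionerror} applied to the identified model $\hat{\mathcal{S}}$, and the transitivity of $\prec_{\cdot}$ used inside Lemma \ref{Lemma abstraction bound} is what allows the two error contributions --- identification error $\sigma - \varepsilon$ and abstraction error $\varepsilon$ --- to be summed rather than compounded.

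The main obstacle I anticipate is the uniformity issue: Lemma \ref{Lemma bounded error probability} as written is a pointwise tail bound on the Gaussian prediction error, while Lemma \ref{Lemma abstraction bound} requires the error bound to hold at every state $x$ visited by the abstraction. This is where the active-sampling component comes in as a backstop rather than as an additive ingredient: Lemmas \ref{activeerrorbound} and \ref{variancebound} jointly imply that $\lambda_{t}^{1/2}\varrho_{i,t}(x)$ can be driven below any prescribed threshold uniformly on each mode $\hat{\mathcal{X}}_{i}$ in finite time with probability $> 1 - \rho$, which in turn controls the constant $C$ in Assumption \ref{assumption1}. I would therefore close the argument by noting that, after finitely many active-sampling rounds, the hypothesis $\sigma_{p}(x) \leq C$ holds at every queried $x$, so the pointwise bound in Lemma \ref{Lemma bounded error probability} upgrades to a uniform bound and the chain above goes through, delivering the required probability $> 1 - \delta$.
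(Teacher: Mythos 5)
Your proposal is correct and follows essentially the same route as the paper's own proof: set $\eta_{2}=\sigma-\varepsilon$ and $\delta=\alpha_{2}$, use Lemma \ref{Lemma bounded error probability} (supported by Lemmas \ref{activeerrorbound} and \ref{variancebound} for the active-sampling component) to get $\parallel \hat{f}(x)-y\parallel\leq \sigma-\varepsilon$ with probability $>1-\delta$, then chain Lemmas \ref{abstractionerror} and \ref{Lemma abstraction bound} via transitivity of $\prec_{\cdot}$ to obtain $\hat{\mathcal{T}}\prec_{\sigma}\mathcal{T}$ with that probability. Your added remarks on the pointwise-versus-uniform issue address a looseness the paper itself glosses over, but they do not change the argument's structure.
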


\begin{proof}
Set $\delta= \alpha_{2}$ and $\eta_{2}=\sigma-\varepsilon$. According to Lemma 3, Lemma 4, and Lemma 5, the system identification component together with the active sampling component can achieve an estimation $\parallel \hat{f}(x)-y\parallel\leq \eta_{2}$ with probability $> 1-\alpha_{2}$. Then with this estimation result, according to Lemma 1 and Lemma 2, the system abstraction component can generate an abstract transition system $\hat{\mathcal{T}}$ that is $(\eta_{2}+\varepsilon)-$approximately simulated by the real abstract transition system $\mathcal{T}$ with probability $> 1-\delta$.
\end{proof}

\section{Case Study}
\label{sec5}

In this section, we will use a soft robot driven by series pneumatic artificial muscles as an example to demonstrate our algorithm. 

\subsection{Model and LTL Specification}

\begin{figure}[hbtp]
\centering
\includegraphics[width=0.4\textwidth]{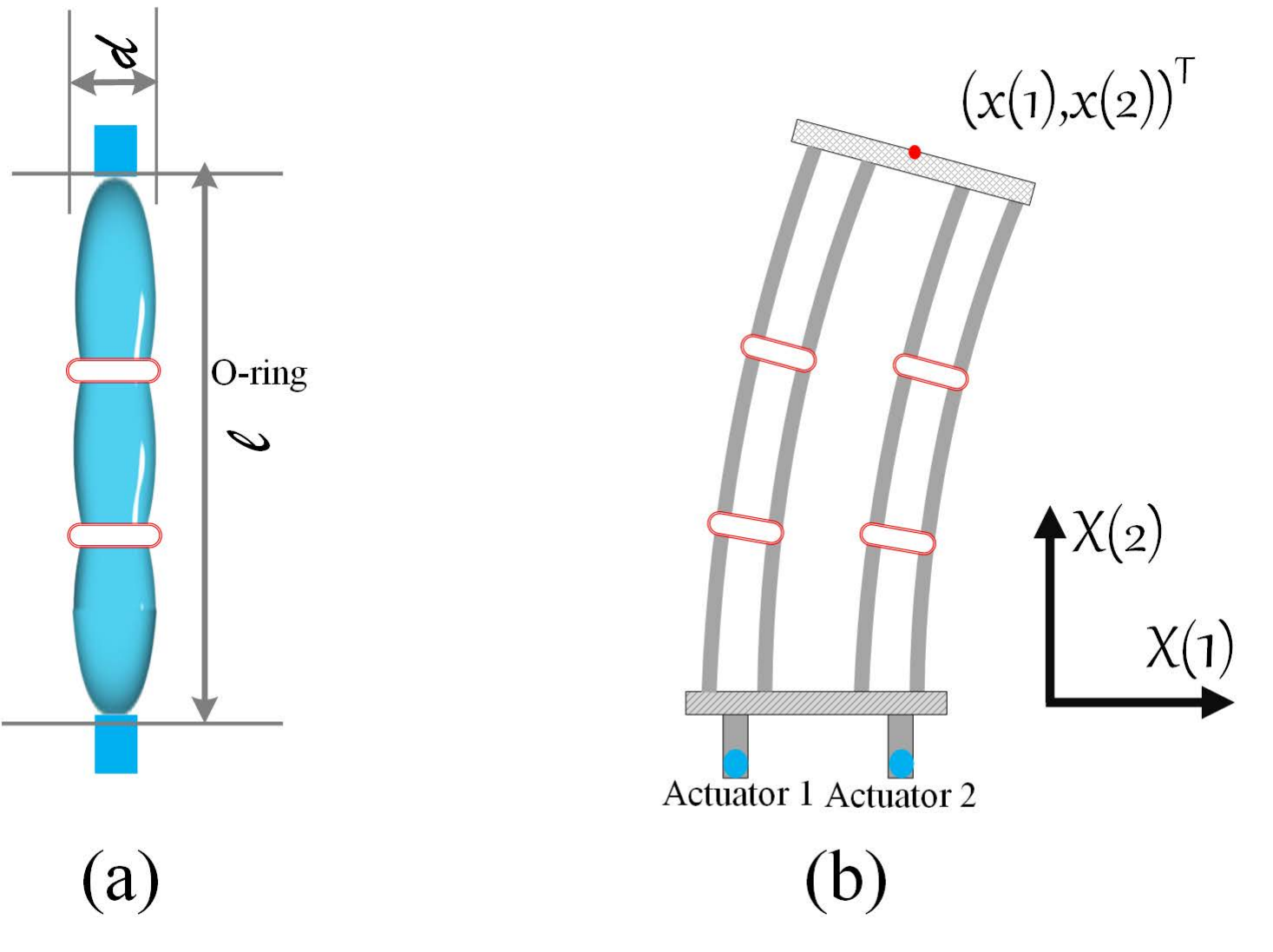}
\caption{(a) Geometrical model of the soft robot driven by series pneumatic artificial muscles (sPAM) used in the case study. (b) Simplified planar robot with $x=(x(1),x(2))^T$ as the coordinate vector of the moving platform.}
\label{Kinematicmodel}
\end{figure}

In this case study, we will focus on a particular type of soft robots, those that are driven by series pneumatic artificial muscles (sPAM) \cite{daerden2001concept}. Fig. \ref{Kinematicmodel} shows an example of such a robot. It has two polyethylene tubing sPAMs, each of which is controlled by a corresponding actuator with pressurized air. Even though the dynamics of such a robot is nonlinear, in  \cite{andrikopoulos2014piecewise}, the authors have shown that its closed-loop behavior can be approximated by piecewise affine dynamics. Here let's assume that the robot under investigation has dynamics as follows:
\begin{equation}
\label{PWApara}
x_{k+1} = f(x_k)+e
\end{equation}
and
\begin{equation*}
f(x)=\left\lbrace 
\begin{array}{lll}
\begin{bmatrix}
1 &0\\
0 & 0.98
\end{bmatrix}
x
\text{  if  } 0\leq x_{k}(1)\leq 0.3\\
\begin{bmatrix}
0.83 &0.12\\
0.12&0.81
\end{bmatrix}
x+ 
\begin{bmatrix}
0.01 \\
0.03
\end{bmatrix}
 \quad  \text{  if  }  x_{k}(1)\geq 0.3
\end{array}\right.
\end{equation*}
where $x=(x(1),x(2))^T$ is the coordinate of the moving platform and $e$ is a Gaussian noise with a zero mean and a standard deviation of 0.1. Please keep in mind that the model is unknown to our algorithm but the model itself can be used by our algorithm as a simulator to generate samples. 

The specification that needs to be verified is written as an LTL formula $\phi:=\square( \pi_{1} \wedge \diamondsuit\pi_{2})$, where  ``$\pi_{1}:=$ $x(1)$ is below 0.3", ``$\pi_{2}:=$ $x(2)$ is above 0.6", and $\square$ is the temporal operator ``Always'', $\diamondsuit$ is the temporal operator ``Eventually''. Put together, $\phi$ specifies that ``it should always be true that $x(1)$ is below 0.3 and eventually $x(2)$ is above 0.6''.

\subsection{Implementation Results}

The algorithm proposed in this paper is implemented as a Matlab tool. The tool takes an LTL formula and a black-box PWA system as inputs and it outputs  an abstract transition system. The active learning used in the tool is the Gaussian Process Adaptive
Confidence Bound (GP-ACB) algorithm proposed and implemented by our group in \cite{chen2016active}. Moreover, a Gaussian kernel function is used in GP-ACB. In order to demonstrate the effectiveness of our algorithm, here we will provide two sets of implementation results, one regarding the system identification component and the other one regarding the entire algorithm.

\subsubsection{Performance of the System Identification Component}

\begin{figure}
\begin{subfigure}{1\linewidth}
\centering
\includegraphics[width=0.9\textwidth]{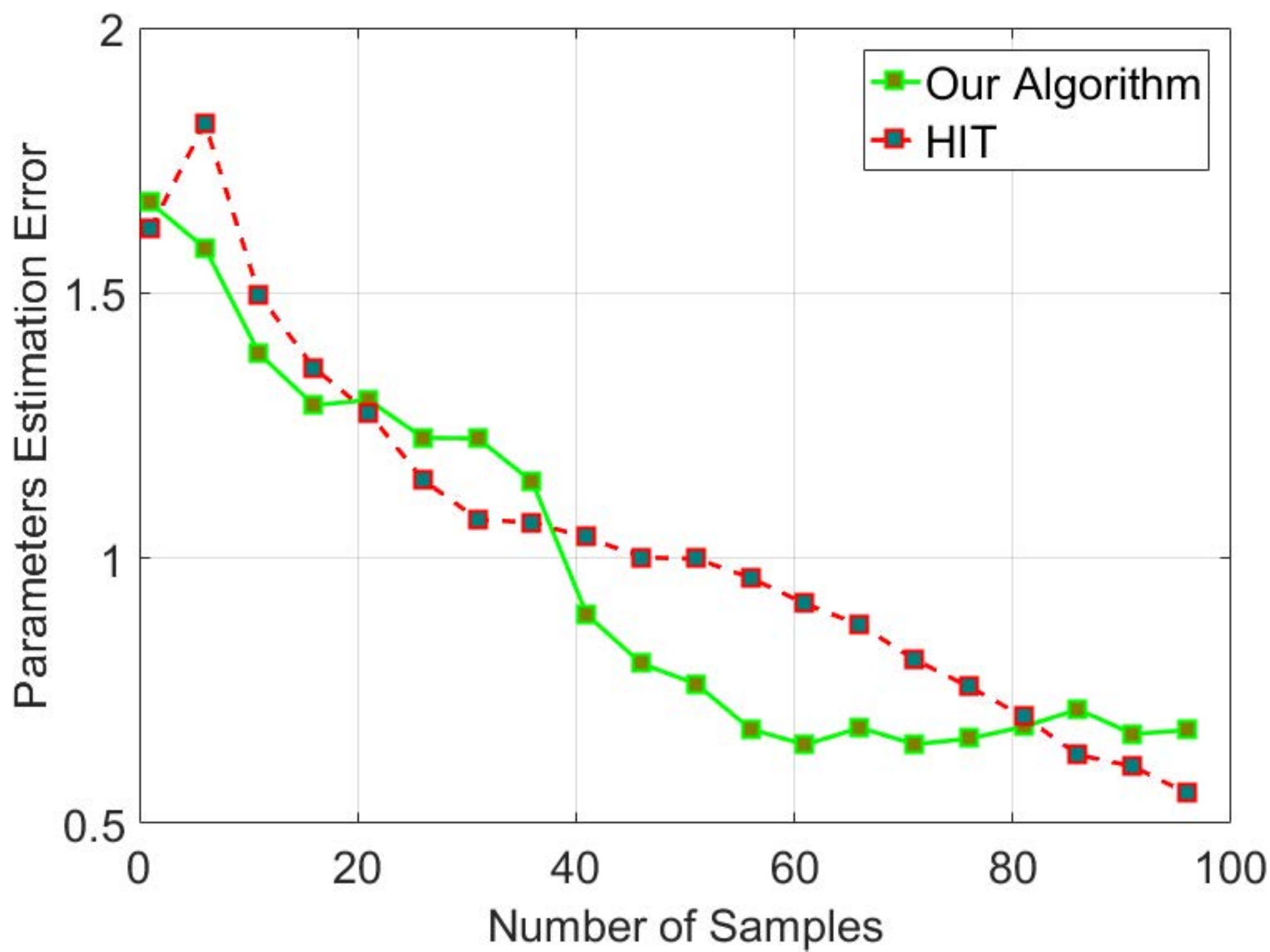}
\caption{}
\label{fig1:figure1}
\end{subfigure}
\begin{subfigure}{1\linewidth}
\centering
\includegraphics[width=0.9\textwidth]{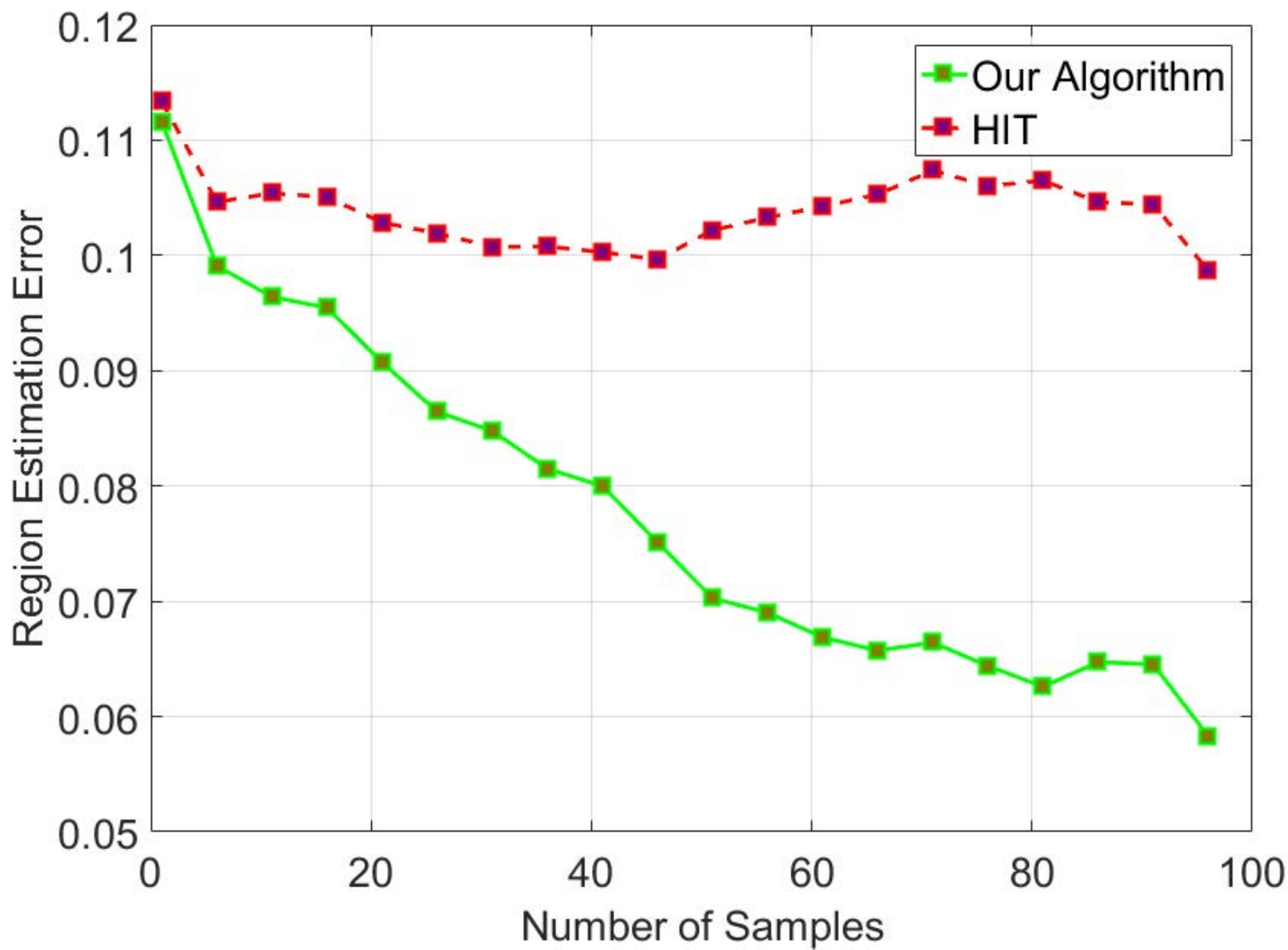}
\caption{}
\label{fig1:figure2}
\end{subfigure}
\caption{Comparison results of the system identification component used in this paper (green) and the HIT algorithm (red): (a) Average parameter estimation error with respect to the number of samples; (b) Average region (partition) estimation error with respect to the number of samples.}
\label{systemidentification}
\end{figure}

Here we compare the performance of our system identification component with an existing state-of-the-art black-box system identification tool called the Hybrid Identification Tool (HIT) \cite{HIT}. One thing we would like to point out here is that the evaluation of our system identification component is conducted in conjunction with other components, e.g., abstraction and active sampling, in the loop. It should be expected that, at worst, our component should have the same performance as the HIT. However, it may also be expected that factors such as active-learning based sampling and abstraction-guided refinement can potentially improve the identification performance, which turns out to be the case, at least for this particular case study. 

We use two metrics to quantify system identification errors: \textit{parameter estimation error} and \textit{region estimation error}. Given the parameters of a real PWA model (unknown to the investigated algorithms), the \textit{parameter estimation error} is the sum of the Euclidean distances between the real parameters and the estimated parameters. The \textit{region estimation error} is defined the sum of the following Hausdorff distance:
\begin{equation*}
e(\hat{\mathcal{X}}_{i},\mathcal{X}_{i})=\max\{\sup_{x\in \mathcal{X}_{i}} \inf_{y\in\hat{\mathcal{X}}_{i}} d(x,y),\sup_{y\in\hat{ \mathcal{X}}_{i}} \inf_{x\in\mathcal{X}_{i}} d(x,y)  \},
\end{equation*}
where $e(\hat{\mathcal{X}}_{i},\mathcal{X}_{i})$ the region estimation error related to the $i$th mode with $\mathcal{X}_{i}$ as the real region or partition and $\hat{\mathcal{X}}_{i}$ as the estimated one, $d(x,y)$ is the Euclidean distance between $x$ and $y$, and $\sup$ and $\inf$ stand for supremum and infimum, respectively. To calculate the Hausdorff distance, we randomly generate 100 samples inside the state space $\hat{\mathcal{X}}_{i}$ and the state space $\mathcal{X}_{i}$.

The comparison results based on 5 trials are shown in Fig. \ref{systemidentification}. It shows that, averagely speaking, the system identification component proposed in this paper has a comparable or better performance than HIT. Particularly, Fig. \ref{fig1:figure2} shows that our algorithm has a faster convergent rate regarding the region estimation error. This is probably due to the fact that, generally speaking, active learning, which is used in our algorithm, out-performs its randomly sampling counterpart, which is used in HIT. 

\subsubsection{Performance of the Entire Algorithm}

As the true abstract transition system of the system, described by Eqn. (\ref{PWApara}), is unknown, here we use the abstract transition system obtained by using the algorithm proposed in \cite{yordanov2013formal} as a benchmark. We will call this abstract transition system  as $\mathcal{T}^{*}$. In \cite{yordanov2013formal}, the authors have shown that even though their algorithm cannot attain the true abstract transition system, it still can get an abstract transition system that is arbitrarily close to the real one. Of course, we should point out that, in order to get this abstract transition system $\mathcal{T}^{*}$, the algorithm in \cite{yordanov2013formal} should have access to the system model, i.e., $\mathcal{T}^{*}$ is generated with a known model. In parallel, we run our algorithm to extract an abstract transition system $\hat{\mathcal{T}}$ without access to the dynamics of the model, i.e., $\hat{\mathcal{T}}$ is generated with a black-box system with unknown dynamics. Then in order to demonstrate the effectiveness of our algorithm, we need to show that $\hat{\mathcal{T}}$ should approach $\mathcal{T}^{*}$. This turns out to be the case, at least for this particular case study.  

Based on the system dynamics, Eqn. (\ref{PWApara}), and the LTL specification $\phi$, we implement the abstraction algorithm proposed in \cite{yordanov2013formal} and obtain an abstract transition system $\mathcal{T}^{*}$, which will be used as a benchmark. Then the algorithm proposed in this paper is applied to the same system, but with unknown dynamics, and the same LTL specification. The output of the algorithm is another abstract transition system $\hat{\mathcal{T}}$. Here we use a metric that is inspired by the concept of approximate simulation to quantify the difference between the two transition systems. To be more specific, we set the abstraction error to $\sigma$ if $\mathcal{T}^{*}$ is $\sigma-$approximately simulated by $\hat{\mathcal{T}}$. Moreover, the metric $\sigma$ is normalized to $\bar{\sigma}$ by the volume of the state space, i.e., $\bar{\sigma}:=\sigma/|\mathcal{X}|$.

\begin{table}[!htb]
\centering  
\caption{Comparison result with respect to different number of  samples and a fixed number of refinement steps, which is set to 20.}
\begin{tabular}{lccc }   
\hline
   &\multicolumn{3}{c}{Number of samples}\\    \hline 
           &20    &40 &60                        \\ \hline  %   
 $\bar{\sigma} $&    0.046   &    0.042 & 0.035 \\    \hline
\end{tabular}
\label{results1}
\end{table}

\begin{table}[!htb]
\centering  
\caption{Comparison result with respect to different number of refinement steps and a fixed number of samples in active sampling component, which is set to 10.}
\begin{tabular}{lccc }   
\hline
   &\multicolumn{3}{c}{Refinement Steps}\\    \hline 
      &5   &10 &20       \\ \hline  %   
 $\bar{\sigma}$ &0.077    &0.068  & 0.050   \\    \hline
\end{tabular}
\label{results2}
\end{table}

\begin{figure}[hbtp!]
\centering
\includegraphics[width=0.45\textwidth]{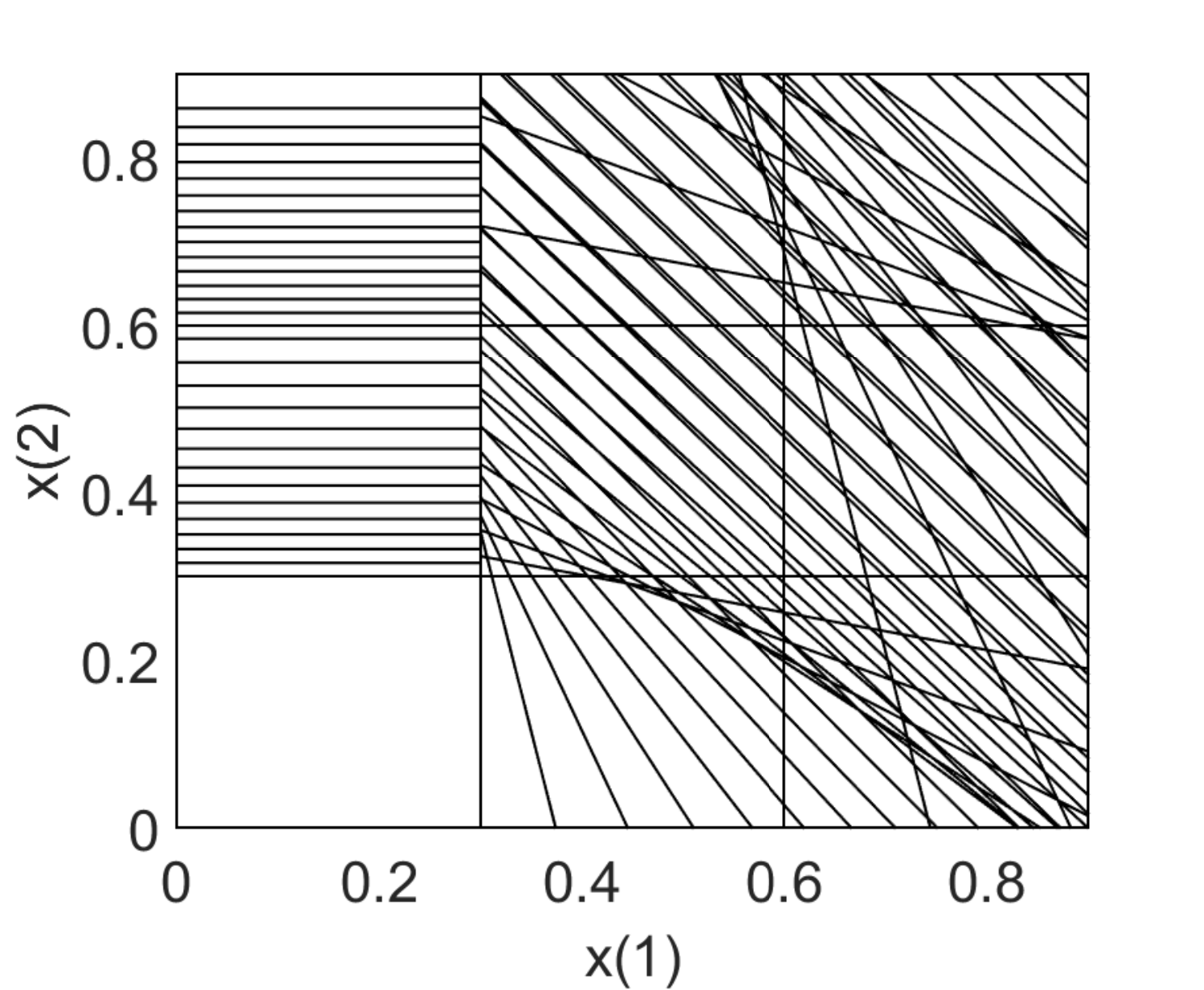}
\caption{State space partitions of the abstract transition system obtained by using our algorithm for the case study. The number of refinement steps and the number of samples in the active sampling component are both set at 20.}
\label{abstraction_results}
\end{figure}

The comparison results are shown in Table \ref{results1} and Table \ref{results2}. Table \ref{results1} shows the comparison results with respect to different number of samples and a fixed number of refinement steps in the system abstraction component. Table \ref{results2} shows the comparison results with respect to different number of refinement steps and a fixed number of samples in the active sampling component. The tables show the average normalized errors $\bar{\sigma}$ based on 5 trials. The results show that the larger the number of refinement steps, the smaller the abstraction error; and the larger the number of samples, the smaller the abstraction error. The results also show that, even with a black-box system, our algorithm can attain an approximate abstract transition system $\hat{\mathcal{T}}$ that is close to the benchmark abstract transition system $\mathcal{T}^{*}$.
 
\section{Conclusions}
\label{sec6}

In this paper, we proposed a data-driven approximate abstraction algorithm for piecewise affine systems with unknown dynamics. We demonstrated both theoretically and empirically that given a black-box PWA system and an LTL specification, we were able to derive an abstract transition system that is approximately simulated by the true abstract transition system. We demonstrated the effectiveness of our proposed algorithm with a soft robot as a case study. 

\section*{Acknowledgments}

This work was partially supported by the Hyundai Motor Company.

\bibliography{references}
\bibliographystyle{IEEEtran}
\end{document}